\NeedsTeXFormat{LaTeX2e}
\documentclass[12pt]{article}
\usepackage{amscd,amsmath,amssymb,amstext,amsthm,exscale,latexsym}
\usepackage{graphicx}
\textwidth160mm
\textheight 240mm
\topmargin -10mm
\oddsidemargin 10mm
\evensidemargin -10mm
\newcommand {\al}   {\alpha}       \newcommand {\bt}  {\beta}
\newcommand {\g }   {\gamma}       
\newcommand {\dl}   {\delta}       \newcommand {\e }  {\epsilon}

\newcommand {\s }   {\sigma}

         \newcommand {\om}  {\omega}

\newcommand {\pl}   {\partial}     
\renewcommand {\sin}{{\sf\,sin\,}}       \renewcommand {\cos}{{\sf\,cos\,}}



       \renewcommand {\lim}{{\sf\,lim\,}}

\newcommand   {\const}{{\sf\,const}}     \newcommand   {\diag}{{\sf\,diag\,}}


\newcommand {\MO}  {{\mathbb O}}   
   \newcommand {\MR}  {{\mathbb R}}
\newcommand {\MS}  {{\mathbb S}}   
\newcommand {\MU}  {{\mathbb U}}




\newcommand {\CE }  {{\cal E}}

\newcommand {\CM }  {{\cal M}}      
      \newcommand {\CP}  {{\cal P}}

\newcommand {\Sa}  {{\textsc{a}}}   \newcommand {\Sb}  {{\textsc{b}}}

\newcommand {\Sg}  {{\textsc{g}}}   
\newcommand {\Si}  {{\textsc{i}}}   \newcommand {\Sj}  {{\textsc{j}}}
   
   \newcommand {\Sn}  {{\textsc{n}}}

\newcommand {\Ss}  {{\textsc{s}}}




\newtheorem{prop}{Proposition}[section]

\theoremstyle{definition}

\begin{document}
\title     {Nonrelativistic limit of the bosonic string}
\author    {M. O. Katanaev
            \thanks{E-mail: katanaev@mi.ras.ru}\\ \\
            \sl Steklov Mathematical Institute,\\
            \sl 119991, Moscow, ul. Gubkina, 8}
\maketitle
\begin{abstract}
We propose the action for the nonrelativistic string invariant under general
coordinate transformations on the string worldsheet. The Hamiltonian formulation
for the nonrelativistic string is given. Particular solutions of the
Euler--Lagrange equations are found in the time gauge.
\end{abstract}
\section{Introduction}
String theory is one of the basic fields of research in theoretical physics over
last 50 years (see, e.g., \cite{BarNes90,GrScWi87,BriHen88}). String theory is
based on the Nambu--Goto bosonic string theory, whose action was independently
proposed by several authors [4--10].
\nocite{BarChe66A,BarChe66B,Nambu70,Nielse70,Susski70,Hara71,Goto71}
This action is invariant with respect to the Poincar\'e group in the target
space and therefore describes relativistic string.

There is a natural question: ``What happens with the Nambu--Goto action in the
nonrelativistic limit?'' For example, the standard action from classical
Newton mechanics arises in the nonrelativistic limit for a point particle
(see, e.g., \cite{LanLif62}). The problem for the string is contained in the
definition of the nonrelativistic limit. The invariant expansion parameter for
a point particle is given by the ratio of a particle worldline lengthes.
This parameter is not suited for a string because we would like to get an action
which is invariant under general coordinate transformations on the
nonrelativistic string worldsheet. The answer to this question was proposed in
the paper \cite{Katana88}, where the nonrelativistic limit was defined and
exact solution of the equations of motion was found in the form of the
rotating straight rod. Consideration of the nonrelativistic limit for the
bosonic string is important both from theoretical point of view and
applications, for example, in polymer physics.

Note that quantum nonrelativistic string attracted recently much attention as an
individual model \cite{GomOog01,BeGoRoSiYa19}. However, the action and
definition of the limit considered in \cite{Katana88} and \cite{GomOog01}
differ.
\section{The action                                              \label{swbstg}}
The Nambu--Goto Lagrangian is invariant under the Poincar\'e group acting in the
$D$-dimensional Minkowskian space $\MR^{1,D-1}$. Therefore it describes a
relativistic string. The Galilei group is obtained from the Poincar\'e group by
the formal limit of large light velocity $c\to\infty$. Let us point the problem:
we would like to find an action for the bosonic string which is invariant under
general coordinate transformations on the string worldsheet and consistent with
the Galilei transformations in the target space. That is invariant under
translations along all $D$ string coordinates and $\MS\MO(D-1)$-rotations of
the space components. We would like to get this Lagrangian in the
nonrelativistic limit from the Nambu--Goto action. This limit and the model were
proposed in the paper \cite{Katana88}.

The problem is the following. Consider a fixed point on the string worldsheet
$\big(X^\Sa(\tau_0,\s_0)\big)$, where $X^\Sa$, $\Sa=0,1,\dotsc,D-1$, are
Cartesian coordinates in the Minkowski space and $\tau_0,\s_0$ are coordinates
of the point on the string worldsheet. Together with the string, it moves in the
Minkowski space $\MR^{1,D-1}$ along the worldline $\big(X^\Sa(\tau,\s_0)\big)$,
$\tau\in\MR$, where $\s_0$ is a fixed point on the string. Let us divide the
string coordinates on specially noted time and space components:
\begin{equation*}
  (X^\Sa):=(X^0:=T,X^\Si:=Y^\Si),\qquad \Si=1,\dotsc,D-1.
\end{equation*}
From the point of view of an external observer which is located in the target
space this point moves with the observed velocity
\begin{equation}                                                  \label{unbcgt}
  v^\Si:=c\frac{dY^\Si}{dX^0}=c\frac{\dot Y^\Si}{\dot X^0},
\end{equation}
where $c$ is the light velocity, and differentiation is performed along the
world line of the point on the string worldsheet. It is plausible to define the
nonrelativistic limit in the same way as for a point particle:
\begin{equation}                                                  \label{esnbcg}
  \frac{v^\Si}c\to0, \qquad\forall\Si.
\end{equation}
However this limit does not satisfy the requirement of reparameterization
invariance of the string worldsheet because the limit (\ref{unbcgt}) contains
differentiation only with respect to $\tau$. The solution of this problem
results in the definition of the nonrelativistic limit in terms of areas but not
lengthes as in Eq.~(\ref{esnbcg}).

The requirement of the invariance of the action with respect to general
coordinate transformations on the string worldsheet is geometrical. Then the
action describes the string worldsheet but not the coordinate system chosen on
it.

We consider sufficiently smooth embedding of the string worldsheet
$\overline\MU$ in $D$-dimensional Minkowski space $\MR^{1,D-1}$:
\begin{equation}                                                  \label{ubnchs}
  X:\qquad\MR^2\supset\overline\MU\ni\qquad(\tau,\s)\mapsto
  \big(X^\Sa(\tau,\s)\big)\qquad\in\MR^{1,D-1},
\end{equation}
where $X^\Sa$, $\Sa=0,1,\dotsc,D-1$, are Cartesian coordinates in Minkowski
space with metric $\eta_{\Sa\Sb}:=\diag(+-\dotsc-)$ and $\tau,\s$ are
coordinates on the string worldsheet. We assume that the coordinate $\tau\in\MR$
is timelike and $\s$ is spacelike. That is
$\dot X^2:=\dot X^\Sa\dot X^\Sb\eta_{\Sa\Sb}>0$ and
$X^{\prime2}:=X^{\prime\Sa}X^{\prime\Sb}\eta_{\Sa\Sb}<0$, where the dot and
prime denote respectively differentiation on $\tau$ and $\s$. By assumption,
the space coordinate for open and closed strings is varied in the intervals
$\s\in[0,\pi]$ and $\s\in[-\pi,\pi]$, respectively.

The Nambu--Goto action is proportional to the area of a string worldsheet
\begin{equation}                                                  \label{ubxvgy}
  S_{\Sn\Sg}:=-\rho c\int_{\overline\MU}\!\!dx\sqrt{|h|}
  =-\rho c\int_{\overline\MU}\!\!d\tau d\s\sqrt{\displaystyle(\dot X,X')^2
  -\dot X^2X^{\prime2}},
\end{equation}
where $\rho=\const$ is the linear mass density of a string, $c$ is the velocity
of light and parenthesis denote the scalar product in Minkowskian space.

Now we define the expansion parameter. The projection of the area element of a
string worldsheet $d\upsilon:=d\tau d\s\sqrt{|h|}$ on spacelike hypersurface
$T=\const$ has the form
\begin{equation*}
  d\upsilon_\perp=d\tau d\s\sqrt{\big|(\dot YY')^2-\dot Y^2Y^{\prime\,2}}\big|
  =d\tau d\s\sqrt{\dot Y_\perp^2Y^{\prime\,2}},
\end{equation*}
where
\begin{equation*}
  \dot Y_\perp^\Si:=\dot Y^\Si-\frac{(\dot Y,Y')Y^{\prime\Si}}{Y^{\prime2}}
\end{equation*}
is the orthogonal component of the velocity vector and, for brevity, we drop
indices $\Si$ enumerating space coordinates of the string. To derive this
formula, it is sufficient to put $T=\const$ in the determinant of the induced
metric $h$. It is important to note that the volume element $d\upsilon_\perp$
has the correct transformation properties under coordinate changes because
we have the determinant of the metric induced by the embedding of the string
worldsheet in the Euclidean space $\MR^{D-1}\subset\MR^{1,D-1}$ under the square
root. Let us denote the projection of $d\upsilon$ on the coordinate plane
$(T,Y^\Si)$ by $d\upsilon_\Si$. Then
\begin{equation*}
  d\upsilon_\Si:=d\tau d\s \sqrt{\dot T^2Y^{\prime\Si\,2}+
  T^{\prime\,2}\dot Y^{\Si\,2}-2\dot T T'\dot Y^\Si Y^{\prime\Si}},
\end{equation*}
where summation over $\Si$ is absent.

We introduce the ratio
\begin{equation}                                                  \label{esfwrn}
  \e:=\frac{(d\upsilon_\perp)^2}{(d\upsilon_0)^2}
  =\frac{\dot Y_\perp^2Y^{\prime\,2}}{A^2}>0,
\end{equation}
where
\begin{align}                                                          \nonumber
  (d\upsilon_0)^2:=&\sum_{\Si=1}^{D-1}(d\upsilon_\Si)^2:=d\s^2d\tau^2A^2,
\\                                                                \label{enndhy}
  A^2:=&(\dot TY'-T'\dot Y)^2
  =\dot T^2Y^{\prime2}+T^{\prime2}\dot Y^2-2\dot TT'(\dot Y,Y').
\end{align}
Here and in what follows, the summation over space components of the string is
performed using the Euclidean metric:
\begin{equation*}
  \dot Y^2:=\dot Y^\Si\dot Y^\Sj\dl_{\Si\Sj},\qquad
  Y^{\prime2}:=Y^{\prime\Si}Y^{\prime\Sj}\dl_{\Si\Sj},\qquad
  (\dot Y,Y'):=\dot Y^\Si Y^{\prime\Sj}\dl_{\Si\Sj}.
\end{equation*}
Since area elements $d\upsilon_\perp$ and $d\upsilon_0$ have the same
transformation rules, the ratio $\e(x)$ is a scalar field (function).
Therefore it can be used as the invariant expansion parameter assuming
$\e\ll1$.

Let us rewrite the Nambu--Goto action (\ref{ubxvgy}) in new notation:
\begin{equation*}
  S_{\Sn\Sg}=-\rho c\int\!d\tau d\s\sqrt{A^2-\dot Y_\perp^2Y^{\prime\,2}}.
\end{equation*}
Then we obtain the Lagrangian for a nonrelativistic string in the first order in
expansion in $\e$:
\begin{equation}                                                  \label{ubcnde}
  L_{\Sn\Ss}=\rho c\sqrt{A^2}\left(\frac{\dot Y_\perp^2Y^{\prime\,2}}{2A^2}
  -1\right).
\end{equation}
This Lagrangian is the answer to the question raised beforehand. By
construction, it is invariant with respect to general coordinate transformations
of $(\tau,\s)$ on the string worldsheet. Moreover, the Lagrangian (\ref{ubcnde})
is invariant under translations $X^\Sa\to X^\Sa+\const^\Sa$ and global
$\MS\MO(D-1)$-rotations acting on the space string coordinates $Y$.

We now consider the Lagrangian for a nonrelativistic string (\ref{ubcnde}) by
itself do not paying attention on how it was derived and do not assuming
smallness of the parameter $\e$. There are open, $\s\in[0,\pi]$, and closed,
$\s\in[-\pi,\pi]$, strings as in the relativistic case.

Lagrangian (\ref{ubcnde}) takes simple and visual form in the time gauge
$\tau=T$, $(\dot Y,Y')=0$:
\begin{equation}                                                  \label{ubxhsp}
  L_{\Sn\Ss}\big|_\text{time gauge}
  =\rho c\sqrt{Y^{\prime\,2}}\left(\frac12\dot Y^2-1\right),
\end{equation}
where the first summand is the kinetic term for the transverse oscillations of
the string, and the second one is equal to the potential energy which is
proportional to the length of the string. The common factor $\sqrt{Y^{\prime2}}$
is due to the arbitrariness in the choice of the space coordinate $\s$. In this
gauge, the expansion parameter (\ref{esfwrn}) takes the form
\begin{equation*}
  \e=\frac{v^2}{c^2},
\end{equation*}
that is the limit $\e\to0$ is really nonrelativistic.

From now on we put $c=1$ for simplicity.

To derive equations of motion for the nonrelativistic string, we rewrite
Lagrangian (\ref{ubcnde}) through independent variables $(T,Y)$, with respect to
which it is varied:
\begin{equation}                                                  \label{ubbcvf}
  L_{\Sn\Ss}=\rho\left[\frac{\dot Y^2Y^{\prime2}-(\dot Y,Y')^2}
  {2\sqrt{\dot T^2Y^{\prime2}+T^{\prime2}\dot Y^2-2\dot TT'(\dot Y,Y')}}
  -\sqrt{\dot T^2Y^{\prime2}+T^{\prime2}\dot Y^2-2\dot TT'(\dot Y,Y')}\right]
\end{equation}
We introduce notation to simplify the subsequent formulae:
\begin{align}                                                     \label{ubbcft}
  P^\tau_0:=&\frac{\pl L_{\Sn\Ss}}{\pl\dot T}
  =-\frac\rho{\sqrt{A^2}}
  \left(\frac{\dot Y_\perp^2Y^{\prime2}}{2A^2}+1\right)
  \big(\dot T Y^{\prime2}-T'(\dot Y,Y')\big),
\\                                                                \label{ubcvdg}
  P^\tau_\Si:=&\frac{\pl L_{\Sn\Ss}}{\pl\dot Y^\Si}
  =\frac\rho{\sqrt{A^2}}\left[Y^{\prime2}\dot Y_{\perp\Si}
  -\left(\frac{\dot Y^2_\perp Y^{\prime2}}{2A^2}+1\right)
  \big(T^{\prime2}\dot Y_\Si-\dot TT'Y^\prime_\Si\big)\right],
\\
  P^\s_0:=&\frac{\pl L_{\Sn\Ss}}{\pl T'}
  =-\frac\rho{\sqrt{A^2}}
  \left(\frac{\dot Y_\perp^2Y^{\prime2}}{2A^2}+1\right)
  \big(T'\dot Y^2-\dot T(\dot Y,Y')\big),
\\                                                                \label{qbsewn}
  P^\s_\Si:=&\frac{\pl L_{\Sn\Ss}}{\pl Y^{\prime\Si}}
  =\frac\rho{\sqrt{A^2}}
  \left[\dot Y^2Y^\prime_\Si-(\dot Y,Y')\dot Y_\Si
  -\left(\frac{\dot Y_\perp^2Y^{\prime2}}{2A^2}+1\right)
  \big(\dot T^2Y^\prime_\Si-\dot TT'\dot Y_\Si\big)\right].
\end{align}
Then the Euler--Lagrange equations are rewritten as
\begin{align}                                                     \label{unmvjj}
  \frac{\dl S_{\Sn\Ss}}{\dl T}=&-\frac\pl{\pl\tau}P^\tau_0
  -\frac\pl{\pl\s}P^\s_0=0,
\\                                                                \label{ubbdng}
  \frac{\dl S_{\Sn\Ss}}{\dl Y^\Si}=&-\frac\pl{\pl\tau}P^\tau_\Si
  -\frac\pl{\pl\s}P^\s_\Si=0.
\end{align}

As in the case of the relativistic string, we assume vanishing variations on
the boundaries $\tau=\tau_{1,2}$, and consider arbitrary variations on the
boundaries $\s=0,\pi$ (free ends). Then the variational principle implies the
boundary conditions
\begin{equation}                                                  \label{ubndhy}
  P^\s_0\big|_{\s=0,\pi}=0,\qquad P^\s_\Si\big|_{\s=0,\pi}=0,\qquad\forall\Si
\end{equation}
for an open string.

In the time gauge $\tau=T$, $(\dot Y,Y')=0$ the following equalities hold
\begin{equation*}
  P^\s_0\equiv0,\qquad P^\s_\Si=-\frac\rho{\sqrt{Y^{\prime2}}}
  \left(\frac{\dot Y^2}2-1\right)Y'_\Si.
\end{equation*}
Therefore, the boundary conditions take the form
\begin{equation*}
  \left(\frac{\dot Y^2}2-1\right)\left.\frac{Y'_\Si}
  {\sqrt{Y^{\prime2}}}\right|_{\s=0,\pi}=0.
\end{equation*}
These imply two possibilities for space string components. The first is
\begin{equation}                                                  \label{ubvftr}
  \dot Y^2\big|_{\s=0,\pi}=2.
\end{equation}
The second implies
\begin{equation*}
  \left.\frac{Y'_\Si}{\sqrt{Y^{\prime2}}}\right|_{\s=0,\pi}=0,
  \qquad\forall\Si.
\end{equation*}
Since the vector in the left hand side of the last equality has the unit length,
the continuity condition is broken in the second case, and therefore we do not
consider it. Thus, the boundary conditions for the nonrelativistic string
in the time gauge has the form (\ref{ubvftr}). These boundary conditions for a
free nonrelativistic string follow from the least action principle and mean that
the string end points move with the velocity $\sqrt2c$ which is greater then the
velocity of light. There is no contradiction because Newton's mechanics admits
arbitrary velocities.

Equations of motion for the nonrelativistic bosonic string (\ref{unmvjj}) and
(\ref{ubbdng}) are not independent. Invariance of the action under the choice of
coordinates on the string worldsheet, due to Noether's theorem, implies two
linear identities:
\begin{equation}                                                  \label{ummsjt}
  \frac{\dl S_{\Sn\Ss}}{\dl T}\pl_\al T
  +\frac{\dl S_{\Sn\Ss}}{\dl Y^\Si}\pl_\al Y^\Si\equiv0,\qquad\al=0,1.
\end{equation}

Since Lagrangian (\ref{ubcnde}) and the corresponding action are invariant with
respect to global translations in the target space
\begin{equation*}
  T\mapsto T+\e^0,\qquad Y^\Si\mapsto Y^\Si+\e^\Si,\qquad x^\al\mapsto x^\al,
\end{equation*}
where $(\e^0,\e^\Si)$ are constant parameters, the first Noether theorem implies
the conservation of the currents
\begin{equation}                                                  \label{ebcvdg}
  \pl_\al J_\Sa{}^\al=0,\qquad\forall\Sa,
\end{equation}
where
\begin{align*}
  J_0{}^\tau=&-\frac{\pl L_{\Sn\Ss}}{\pl\dot T}=-P_0^\tau, &
  J_0{}^\s=&-\frac{\pl L_{\Sn\Ss}}{\pl T'}=-P_0^\s,
\\
  J_\Si{}^\tau=&-\frac{\pl L_{\Sn\Ss}}{\pl\dot Y^\Si}=-P_\Si^\tau, &
  J_\Si{}^\s=&-\frac{\pl L_{\Sn\Ss}}{\pl Y^{\prime\Si}}=-P_\Si^\s.
\end{align*}
Formally, the Euler--Lagrange equations (\ref{unmvjj}), (\ref{ubbdng}) coincide
with the current (\ref{ebcvdg}) conservation. This yields the physical
interpretation to the introduced notation: up to a sign, $P_0^\tau$ is the
linear energy density and $P_\Si^\tau$ in the linear momentum density of the
nonrelativistic string. The total energy and momentum are given by the
integrals
\begin{align}                                                     \label{unncbd}
  \CE=:-\int\!d\s\frac{\pl L_{\Sn\Ss}}{\pl\dot T},
\\                                                                \label{ubsvwf}
  \CP_\Si=:-\int\!d\s\frac{\pl L_{\Sn\Ss}}{\pl\dot Y^\Si},
\end{align}
where integration is performed from zero to $\pi$ or from $-\pi$ to $\pi$ for
open and closed strings, respectively.

Moreover, the action for the nonrelativistic string is invariant under global
$\MS\MO(D-1)$-rotations which take the following infinitesimal form
\begin{equation*}
  T\mapsto T,\qquad Y^\Si\mapsto Y^\Si-Y^\Sj\om_\Sj{}^\Si,\qquad
  x^\al\mapsto x^\al,
\end{equation*}
where $\om^{\Si\Sj}=-\om^{\Sj\Si}$ are rotational parameters. Due to the first
Noether's theorem, the invariance of the action results in the current
conservation on the equations of motion
\begin{equation}                                                  \label{ubbvsg}
  \pl_\al J_{\Si\Sj}{}^\al=0,
\end{equation}
where
\begin{equation}                                                  \label{unnchg}
\begin{split}
  J_{\Si\Sj}{}^\tau=&Y_\Si\frac{\pl L_{\Sn\Ss}}{\pl\dot Y^\Sj}
  -Y_\Sj\frac{\pl L_{\Sn\Ss}}{\pl\dot Y^\Si},
\\
  J_{\Si\Sj}{}^\s=&Y_\Si\frac{\pl L_{\Sn\Ss}}{\pl Y^{\prime\Sj}}
  -Y_\Sj\frac{\pl L_{\Sn\Ss}}{\pl Y^{\prime\Si}}.
\end{split}
\end{equation}
It implies conservation of the total angular momentum of the
nonrelativistic string
\begin{equation}                                                  \label{ubnchg}
  \CM_{\Si\Sj}:=\int_{0,-\pi}^\pi\!\!\!d\s\left(
  Y_\Si\frac{\pl L_{\Sn\Ss}}{\pl\dot Y^\Sj}
  -Y_\Sj\frac{\pl L_{\Sn\Ss}}{\pl\dot Y^\Si}\right).
\end{equation}
Note that $J_{0\Si}{}^\al\equiv0$ for space rotations.
\section{Canonical formulation}
Momenta conjugate to $T$ and $Y^\Si$ are
\begin{align}                                                     \label{uvvcfd}
  P:=&P_0^\tau=\frac{\pl L_{\Sn\Ss}}{\pl\dot T}
  =-\frac\rho{\sqrt{A^2}}\left(\frac{\dot Y_\perp^2Y^{\prime2}}{2A^2}
  +1\right)\big(\dot TY^{\prime2}-T'(\dot Y,Y')\big),
\\                                                                \label{unsbsw}
  P_\Si:=&P_\Si^\tau=\frac{\pl L_{\Sn\Ss}}{\pl\dot Y^\Si}
  =~\frac\rho{\sqrt{A^2}}\left[Y^{\prime2}\dot Y_{\perp\Si}
  -\left(\frac{\dot Y_\perp^2Y^{\prime2}}{2A^2}+1\right)
  (T^{\prime2}\dot Y_\Si-\dot TT'Y'_\Si)\right].
\end{align}
As in the case of relativistic string, the canonical Hamiltonian for the
nonrelativistic string is equal to zero
\begin{equation*}
  H:=P\dot T+P_\Si\dot Y^\Si-L_{\Sn\Ss}=0,
\end{equation*}
as the consequence of straightforward calculations. It means that the dynamics
of the model is entirely defined by the constraints which are present in the
theory. Since the action of the nonrelativistic string
\begin{equation*}
  S_{\Sn\Ss}:=\rho\int\!d\tau d\s
  \sqrt{A^2}\left(\frac{\dot Y_\perp^2Y^{\prime2}}{2A^2}-1\right)
\end{equation*}
is invariant with respect to arbitrary coordinate transformations $\tau,\s$ on
the string worldsheet, we expect that the model contains two primary first class
constraint. Unfortunately, the form of momenta (\ref{uvvcfd}), (\ref{unsbsw}) is
relatively complicated, and seeing these constraints is problematic. Therefore
we introduce notation for next calculations
\begin{equation}                                                  \label{unnbdf}
  \dot T_\perp:=\dot T-\frac{(\dot Y,Y')}{Y^{\prime2}}T',
\end{equation}
\begin{equation}                                                  \label{uajqwh}
\begin{split}
  A^2=&\dot T_\perp^2Y^{\prime2}+T^{\prime2}\dot Y_\perp^2,
\\
  \e A^2=&\dot Y_\perp^2Y^{\prime2},
\end{split}
\end{equation}
where the variable $\e$ (\ref{esfwrn}) is by this time not small. Then
expressions for momenta (\ref{uvvcfd}), (\ref{unsbsw}) take the form
\begin{equation}                                                  \label{unncbl}
\begin{split}
  P=&-\frac\rho{\sqrt{A^2}}\left(\frac\e2+1\right)\dot T_\perp Y^{\prime2},
\\
  P_\Si=&~~\frac\rho{\sqrt{A^2}}\left[Y^{\prime2}\dot Y_{\perp\Si}
  -\left(\frac\e2+1\right)\big(T^{\prime2}\dot Y_{\perp\Si}-\dot T_\perp T'
  Y'_\Si\big)\right].
\end{split}
\end{equation}

Now we derive expressions invariant with respect to $\MS\MO(D-1)$-rotations
which are quadratic in generalized coordinates and momenta
\begin{align}                                                     \label{ubvcfu}
  P^2=&~~\rho^2\left(\frac\e2+1\right)^2(Y^{\prime2}-\e T^{\prime2}),
\\                                                                \label{unbcgs}
  P_\Si^2:=P^\Si P_\Si=&~~\rho^2\left(T^{\prime2}+\e Y^{\prime2}
  -\e T^{\prime2}-\frac34\e^2T^{\prime2}\right),
\\                                                                \label{evdbfr}
  PT'=&-\frac\rho{\sqrt{A^2}}\left(\frac\e2+1\right)\dot T_\perp T'Y^{\prime2},
\\                                                                \label{eabdnf}
  P_\Si Y^{\prime\Si}=&
  ~~\frac\rho{\sqrt{A^2}}\left(\frac\e2+1\right)\dot T_\perp T'Y^{\prime2}.
\end{align}
The last two equalities imply the primary constraint
\begin{equation}                                                  \label{unncht}
  H_1:=PT'+P_\Si Y^{\prime\Si}=0.
\end{equation}
This constraint is kinematical and has the same form as for the relativistic
string. The dynamical constraint $H_0$ is more complicated. For its deriving,
one can find $\e$ from equality (\ref{unbcgs}) by solving quadratic equation.
Afterwards this solution is to be substituted into equality (\ref{ubvcfu}), and
this results in dynamical constraint in the form of the polynomial in canonical
variables and their space derivatives. The resulting expressions are cumbersome,
and we leave them for future investigations. General considerations imply that
constraints $H_0$ and $H_1$ must be of the first class, and their Poisson
bracket algebra should be isomorphic to conformal algebra.
\section{The time gauge I}
Equations of motion for the nonrelativistic string (\ref{unmvjj}),
(\ref{ubbdng}) are complicated, and, for their simplification, we fix the time
gauge using the
invariance of the model with respect general coordinate transformations on the
string worldsheet. First, we fix the conformal gauge for the induced metric
\begin{equation}                                                  \label{unncbc}
\begin{split}
  \dot X^2+X^{\prime2}=0\qquad\Leftrightarrow&\qquad
  \dot T^2-\dot Y^2+T^{\prime2}-Y^{\prime2}=0,
\\
  (\dot X,X')=0\qquad\Leftrightarrow&\qquad \dot TT'-(\dot Y,Y')=0.
\end{split}
\end{equation}
Next, we impose the additional condition using the residual conformal invariance
\begin{equation}                                                  \label{eggdft}
  \tau=T
\end{equation}
in the same way as was done for the relativistic string. Then the conformal
gauge (\ref{unncbc}) becomes
\begin{align}                                                     \label{uncbdo}
  \dot Y^2+Y^{\prime2}=&1,
\\                                                                \label{unbsht}
  (\dot Y,Y')=&0.
\end{align}
We call conditions (\ref{eggdft})--(\ref{unbsht}) for the nonrelativistic string
{\em the time gauge I}. In Section \ref{swbstg}, we used partial time gauge
without condition (\ref{uncbdo}).

In what follows, we need consequences of conditions (\ref{uncbdo}),
(\ref{unbsht}) obtained by differentiation:
\begin{align}                                                     \label{ubvfrj}
  \pl_0(\dot Y^2+Y^{\prime2}=1):&\qquad(\dot Y,\ddot Y)+(Y',\dot Y')=0,
\\                                                                \label{ubanww}
  \pl_1(\dot Y^2+Y^{\prime2}=1):&\qquad(\dot Y,\dot Y')+(Y',Y'')=0,
\\                                                                \label{esswrt}
  \pl_0\big((\dot Y,Y')=0\big):&\qquad(\ddot Y,Y')+(\dot Y,\dot Y')=0,
\\                                                                \label{eahysr}
  \pl_1\big((\dot Y,Y')=0\big):&\qquad(\dot Y',Y')+(\dot Y,Y'')=0.
\end{align}
Formulae (\ref{ubvfrj}), (\ref{eahysr}) and (\ref{ubanww}), (\ref{esswrt})
imply equalities:
\begin{align}                                                     \label{unbchg}
  (\dot Y,\ddot Y)-(\dot Y,Y'')=&0,
\\                                                                \label{unncok}
  (Y',Y'')-(Y',\ddot Y)=&0.
\end{align}

In the time gauge, velocities of string points are perpendicular to the string:
\begin{equation*}
  \dot Y_\perp^\Si=\dot Y^\Si
\end{equation*}
and subsidiary fields combinations (\ref{enndhy}) and (\ref{esfwrn}) are
\begin{equation}                                                  \label{unbgtr}
  A^2=Y^{\prime2},\qquad\e=\dot Y^2.
\end{equation}

Now expressions (\ref{ubbcft})--(\ref{qbsewn}) take the form
\begin{align}                                                     \label{ukkiuy}
  P_0^\tau=&-\rho\sqrt{Y^{\prime2}}\left(\frac{\dot Y^2}2+1\right),
\\                                                                \label{unksuu}
  P_\Si^\tau=&~\rho\sqrt{Y^{\prime2}}\dot Y_\Si,
\\                                                                \label{edjsjh}
  P_0^\s=&~0,
\\                                                                \label{ervbxf}
  P_\Si^\s=&~\frac\rho{\sqrt{Y^{\prime2}}}\left(\frac{\dot Y^2}2-1\right)Y'_\Si
\end{align}
and equations of motion (\ref{unmvjj}), (\ref{ubbdng}) are essentially
simplified
\begin{align}                                                     \label{ubbcng}
  &Y^{\prime^2}(\dot Y,\ddot Y)+(Y',\dot Y')\left(\frac{\dot Y^2}2+1\right)=0,
\\                                                                \label{unncbp}
  &Y^{\prime2}\ddot Y_\Si+(Y',\dot Y')\dot Y_\Si+\left(\frac{\dot Y^2}2-1\right)
  Y''_\Si+\frac{\dot Y^2}{2Y^{\prime2}}(Y',Y'')Y'_\Si=0,
\end{align}
where formulae (\ref{ubanww}) and (\ref{uncbdo}) are used.

Note that equalities (\ref{ukkiuy})--(\ref{ervbxf}) hold for weaker conditions:
it is sufficient to put $\tau=T$ and $(\dot Y,Y')=0$.

\begin{prop}
Equation (\ref{ubbcng}) is the consequence of equation (\ref{unncbp}).
\end{prop}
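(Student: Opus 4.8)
The plan is to obtain (\ref{ubbcng}) from (\ref{unncbp}) simply by contracting the latter (vector) equation with $\dot Y^\Si$ and then simplifying with the differential consequences of the time‑gauge conditions.

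The reason such a relation must hold is the first Noether identity (\ref{ummsjt}) coming from reparametrization invariance: its $\al=0$ component is the off‑shell identity $(\dl S_{\Sn\Ss}/\dl T)\,\pl_0T+(\dl S_{\Sn\Ss}/\dl Y^\Si)\,\pl_0Y^\Si\equiv0$. In the time gauge I one has $\pl_0T=\dot T=1$, so this identity expresses the $T$–equation of motion, up to an overall nonzero factor and a sign, as the $\dot Y^\Si$–contraction of the $Y^\Si$–equation of motion; since (\ref{ubbcng}) and (\ref{unncbp}) are exactly the gauge‑fixed forms of $\dl S_{\Sn\Ss}/\dl T=0$ and $\dl S_{\Sn\Ss}/\dl Y^\Si=0$, one is led to contract (\ref{unncbp}) with $\dot Y^\Si$ and compare with (\ref{ubbcng}).

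Carrying this out: in (\ref{unncbp}) the last term acquires the factor $Y'_\Si\dot Y^\Si=(\dot Y,Y')$, which vanishes by (\ref{unbsht}); the remaining three terms yield
\begin{equation*}
  Y^{\prime2}(\ddot Y,\dot Y)+(Y',\dot Y')\dot Y^2
  +\Bigl(\frac{\dot Y^2}2-1\Bigr)(\dot Y,Y'')=0 .
\end{equation*}
Subtracting (\ref{ubbcng}) from this relation, the terms $Y^{\prime2}(\dot Y,\ddot Y)$ cancel and one is left with $\bigl(\frac{\dot Y^2}2-1\bigr)\bigl[(Y',\dot Y')+(\dot Y,Y'')\bigr]$; the bracket is precisely the left‑hand side of the differential gauge identity (\ref{eahysr}) (equivalently it follows from (\ref{unbchg}) together with (\ref{ubvfrj})), hence it vanishes. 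Thus the $\dot Y^\Si$–contraction of (\ref{unncbp}) coincides with (\ref{ubbcng}), which proves the proposition.

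I do not expect a genuine obstacle here: the only point needing a little care is that the scalar prefactor $\frac{\dot Y^2}2-1$ is not identically zero — by (\ref{ubvftr}) it vanishes only at the endpoints of an open string — so the remainder must be eliminated by invoking the gauge identity (\ref{eahysr}), not by a termwise argument. Everything else is bookkeeping with relations already recorded in (\ref{ubvfrj})--(\ref{unncok}).
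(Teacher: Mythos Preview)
Your proposal is correct and matches the paper's own proof: the paper simply says to contract (\ref{unncbp}) with $\dot Y^\Si$ and then invoke the differential gauge identities (it cites (\ref{eahysr}) and (\ref{ubanww})), which is precisely the computation you carry out. Your additional remark that this dependence is forced by the Noether identity (\ref{ummsjt}) is a nice conceptual complement not spelled out in the paper, but the argument itself is the same.
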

\begin{proof}
Contract Eq.~(\ref{unncbp}) with $\dot Y^\Si$ and use Eqs.~(\ref{eahysr})
and (\ref{ubanww}).
\end{proof}

It is easily verified that contraction of Eq.~(\ref{unncbp}) with
$Y^{\prime\Si}$ reduces to condition (\ref{uncbdo}).

Equations of motion (\ref{unncbp}) can be rewritten as
\begin{equation}                                                  \label{unncbk}
  \ddot Y_\Si-Y''_\Si-\dot Y^2\ddot Y_\Si+\frac{\dot Y^2}2Y''_\Si
  +(Y',\dot Y')\dot Y_\Si+\frac{\dot Y^2}{2Y^{\prime2}}(Y',Y'')Y'_\Si=0,
\end{equation}
where equalities (\ref{uncbdo}) and (\ref{ubanww}) are taken into account. Thus,
equations of motion for the nonrelativistic string reduce to the system of
nonlinear equations (\ref{unncbk}) and quadratic constraints (\ref{uncbdo}),
(\ref{unbsht}). We see that equations of motion of the nonrelativistic string
in the time gauge are more complicated than equations for the relativistic
string because of the nonlinearity.

Let us present the class of solutions which is singled out by the extra
condition $\dot Y^2=\const$ linearizing Eqs.~(\ref{unncbk}) in the time gauge.
Since velocities are restricted by condition (\ref{uncbdo}), this class of
solutions cannot describe an open string of finite length because the boundary
condition (\ref{ubvftr}) cannot be satisfied. Therefore we consider an open
infinite string.

Let
\begin{equation}                                                  \label{ubcndg}
  \dot Y^2=\sin^2\g,\qquad Y^{\prime2}=\cos^2\g,\qquad\g=\const\in(0,\pi/2).
\end{equation}
It is clear that constraint (\ref{uncbdo}) holds for all $\g$. Differentiation
of these equalities yields the following relations:
\begin{equation*}
  (\dot Y,\dot Y')=0,\qquad (Y',Y'')=0.
\end{equation*}
This implies that Eqs.~(\ref{unncbk}) reduce to the free wave equation in this
case
\begin{equation}                                                  \label{unnvhg}
  v^2\ddot Y_\Si-Y''_\Si=0,\qquad 0<v^2:=\frac{2\cos^2\g}{1+\cos^2\g}<1.
\end{equation}
A general solution of this equation is
\begin{equation}                                                  \label{enbchd}
  Y_\Si=F_\Si(\xi)+G_\Si(\eta),
\end{equation}
where $F_\Si(\xi)$ and $G_\Si(\eta)$ are arbitrary functins of cone coordinates:
\begin{equation*}
  \xi:=v\tau+\s,\qquad\eta:=v\tau-\s.
\end{equation*}
Extra conditions (\ref{ubcndg}) and (\ref{unbsht}) impose the following
restrictions on arbitrary functions:
\begin{equation}                                                  \label{uncmdj}
\begin{split}
  F^{\prime2}+2(F',G')+G^{\prime2}=&\frac{\sin^2\g}{v^2},
\\
  F^{\prime2}-2(F',G')+G^{\prime2}=&\cos^2\g,
\\
  F^{\prime2}-G^{\prime2}=&0,
\end{split}
\end{equation}
where primes denote differentiation on respective arguments. The obtained
system of equations has the solution
\begin{equation}                                                  \label{unbchj}
\begin{split}
  F^{\prime2}=G^{\prime2}=\frac{\sin^2\g+v^2\cos^2\g}{4v^2}&=:a^2>0,
\\
  (F',G')=\frac{\sin^2\g-v^2\cos^2\g}{4v^2}&.
\end{split}
\end{equation}
Sure, velocity $v$ in the right hand side can be expressed by angle $\g$ using
the definition (\ref{unnvhg}) but it does not simplify the following formulae.

Thus, we get the class of solutions for the nonrelativistic string in the time
gauge I, which has the form (\ref{enbchd}) where arbitrary functions are
restricted by Eqs.~(\ref{unbchj}) parameterized by the angle $\g\in(0,\pi/2)$.
These conditions have nontrivial solutions. For example, there is the solution
in four dimensions
\begin{equation}                                                  \label{unbcgf}
\begin{split}
  F=&(a\xi,~0,~0),
\\
  G=&(b\eta,~c\cos\eta,~c\sin\eta),
\end{split}
\end{equation}
where constants $a$, $b$ and $c$ are defined by equalities
\begin{equation*}
\begin{split}
  a:=&\frac{\sqrt{\sin^2\g+v^2\cos^2\g}}{2v},
\\
  b:=&\frac{\sin^2\g-v^2\cos^2\g}{2v\sqrt{\sin^2\g+v^2\cos^2\g}},
\\
  c:=&\frac{\sin\g\cos\g}{\sqrt{\sin^2\g+v^2\cos^2\g}},
\end{split}
\end{equation*}
which can be simply verified. Thus, space coordinates of the nonrelativistic
string in this case are
\begin{equation}                                                  \label{ubhyuu}
  Y=(a\xi+b\eta,~c\cos\eta,~c\sin\eta).
\end{equation}
In this way, we derive the class of exact solutions for the nonrelativistic
string in the time gauge I which is parameterized by constant $\g\in(0,\pi/2)$.
The initial configuration for $\tau=0$ is
\begin{equation}                                                  \label{unnchj}
  Y(0,\s)=\big((a-b)\s,~c\cos\s,~-c\sin\s),
\end{equation}
where
\begin{equation*}
  a-b=\frac{v\cos^2\g}{\sqrt{\sin^2\g+v^2\cos^2\g}}=\frac{v\cos\g}{\sin\g}c.
\end{equation*}
This is the spiral depicted in Fig.~\ref{fhelix}
\begin{figure}[hbt]
\hfill\includegraphics[width=.4\textwidth]{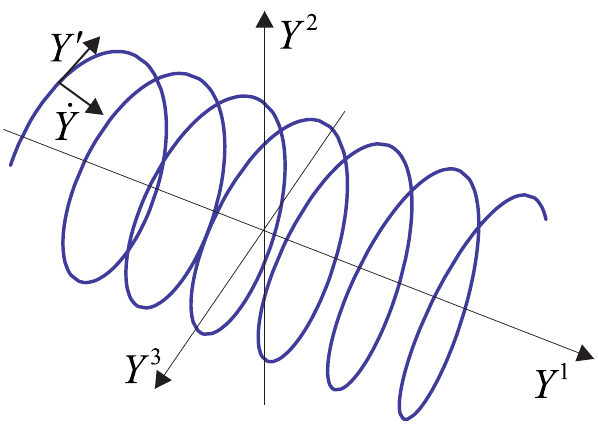}
\hfill {}
\centering\caption{Nonrelativistic bosonic string in the form of infinite
spiral.}
\label{fhelix}
\end{figure}
During the evolution, the spiral moves translational with constant velocity
along $Y^1$ axis and rotates simultaneously around the same axis. If $v\to1$,
(relativistic limit) then $\g\to0$ and the radius of the spiral goes to zero,
i.e.\ the spiral degenerates into the line.
\section{The time gauge II}
The theory of the nonrelativistic string allows one to impose the conformal
gauge using the reparameterization invariance in the same way like it is done
for the relativistic string. This gauge was used in the previous section. The
conformal gauge (\ref{unncbc}) is invariant with respect to the global action
of the Poincar\'e group in the target space. However, since we consider now the
nonrelativistic string, these conditions can be changed. Therefore we impose
the gauge
\begin{equation}                                                  \label{ubbcvl}
  Y^{\prime2}=1,\qquad (\dot Y,Y')=0
\end{equation}
instead of the conformal gauge (\ref{unncbc}). The first condition means that
the length of the string (from the point of view of external observer)
is chosen as parameter $\s$. The second condition means that the velocity vector
is perpendicular to the string. In contrast to the time gauge (\ref{unncbc})
now we do not have restrictions on the square of velocity vector $\dot Y^2$.

In the gauge (\ref{ubbcvl}), the metric on the string worldsheet induced by the
embedding (\ref{ubnchs}) takes the form
\begin{equation}                                                  \label{ubvxfr}
  h_{\al\bt}=\begin{pmatrix}
    \dot T^2-\dot Y^2 & \dot TT' \\ \dot TT' & T^{\prime2}-1 \end{pmatrix}.
\end{equation}
If we consider the metric induced by the embedding of the string worldsheet
in the Euclidean subspace $\MR^{D-1}\subset\MR^{1,D-1}$ (with positive definite
metric) then the Riemannian metric is obtained
\begin{equation}                                                  \label{ubbcny}
  \hat h_{\al\bt}=\begin{pmatrix} \dot Y^2 & 0 \\ 0 & 1 \end{pmatrix}.
\end{equation}
It is known that such coordinate system exists locally. In addition, it is not
defined uniquely. Suppose that the residual symmetry allows us to impose one
more condition
\begin{equation}                                                  \label{ubncht}
  \tau=T.
\end{equation}
We call conditions (\ref{ubbcvl}) and (\ref{ubncht}) {\em the time gauge II}
for the nonrelativistic string. In contrast to the relativistic string, these
conditions are invariant only under global translations and
$\MS\MO(D-1)$-rotations in the target space.

Let us differentiate conditions (\ref{ubbcvl}) on $\tau$ and $\s$, respectively:
\begin{equation*}
  2(Y',\dot Y')=0,\qquad (\dot Y',Y')+(\dot Y,Y'')=0.
\end{equation*}
It implies the equality
\begin{equation}                                                  \label{unbcfr}
  (\dot Y,Y'')=0,
\end{equation}
which holds in the time gauge II.

Equalities (\ref{ukkiuy})--(\ref{ervbxf}) are fulfilled both in the time gauge
I and II.

Now equations of motion (\ref{unmvjj}), (\ref{ubbdng}) are
\begin{align}                                                     \label{unncbg}
  \pl_0\dot Y^2=2(\dot Y,\ddot Y)=0,
\\                                                                \label{uvvxbg}
  \ddot Y_\Si-Y^{\prime\prime}_\Si+\frac12\pl_1(\dot Y^2 Y'_\Si)=0.
\end{align}
\begin{prop}
Equation (\ref{unncbg}) is the consequence of Eqs.~(\ref{uvvxbg}).
\end{prop}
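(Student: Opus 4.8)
The plan is to mimic the argument used for the analogous Proposition in the preceding section: contract the vector equation of motion (\ref{uvvxbg}) with the velocity $\dot Y^\Si$ and check that every resulting term except $(\dot Y,\ddot Y)$ is killed by the gauge conditions of the time gauge II.

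First I would note that $\dot Y^\Si\ddot Y_\Si=(\dot Y,\ddot Y)=\frac12\pl_0\dot Y^2$, so this contraction already reproduces (half of) the left-hand side of (\ref{unncbg}). Next, the contribution $-\dot Y^\Si Y^{\prime\prime}_\Si=-(\dot Y,Y'')$ vanishes because of equality (\ref{unbcfr}), which holds in the time gauge II. For the last term I would apply the Leibniz rule to write $\dot Y^\Si\pl_1(\dot Y^2Y'_\Si)=(\pl_1\dot Y^2)(\dot Y,Y')+\dot Y^2(\dot Y,Y'')$; here the first summand is zero by the gauge condition $(\dot Y,Y')=0$ from (\ref{ubbcvl}), and the second is again zero by (\ref{unbcfr}). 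Collecting the three pieces, the contraction of (\ref{uvvxbg}) with $\dot Y^\Si$ collapses to $(\dot Y,\ddot Y)=0$, equivalently $\pl_0\dot Y^2=0$, which is precisely (\ref{unncbg}).

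I do not expect a genuine obstacle here: the whole argument is a single contraction followed by one use of the product rule. The only point deserving attention is that one must be entitled to set $(\dot Y,Y'')=0$; this is legitimate because (\ref{unbcfr}) was obtained purely by differentiating the gauge conditions $Y^{\prime2}=1$ and $(\dot Y,Y')=0$, and does not rely on the conformal constraint (\ref{uncbdo}), which is absent in the time gauge II. Hence the reduction goes through without any further input.
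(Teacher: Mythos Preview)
Your proposal is correct and follows essentially the same route as the paper: the paper's proof also contracts equation (\ref{uvvxbg}) with $\dot Y$ (phrased as ``substitute $\ddot Y$ from (\ref{uvvxbg}) into (\ref{unncbg})'') and then kills the resulting terms using $(\dot Y,Y')=0$ from (\ref{ubbcvl}) and $(\dot Y,Y'')=0$ from (\ref{unbcfr}). Your explicit remark that (\ref{unbcfr}) is derived solely from the gauge conditions (\ref{ubbcvl}), and not from the conformal constraint (\ref{uncbdo}), is a welcome clarification.
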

\begin{proof}
Substitute $\ddot Y$ from Eq.~(\ref{uvvxbg}) into Eq.~(\ref{unncbg})
\begin{equation*}
  (\dot Y,\ddot Y)=(\dot Y,Y'')-(\dot Y,Y'')\frac{\dot Y^2}2
  -(\dot Y,Y')(\dot Y,\dot Y')=0,
\end{equation*}
as the consequence of Eq.~(\ref{ubbcvl}) and (\ref{unbcfr}).
\end{proof}

In the linear approximation, Equations (\ref{uvvxbg}) describe transverse
(due to extra conditions (\ref{ubbcvl})) oscillations of the string which
propagate along the string with the velocity of light $c$.

Thus the nonrelativistic string in the time gauge II is described only by
the spatial components $Y^\Si(\tau,\s)$ which satisfy equations of motion
(\ref{uvvxbg}) and additional conditions (\ref{ubbcvl}). For open and closed
strings, equations of motion must be supplemented by boundary conditions
(\ref{ubvftr}) and the periodic conditions, respectively.

We give an example of exact solution.
Consider a straight open string of length $L$ which rotates with constant
angular speed in the three dimensional Minkowskian space $\MR^{1,2}$.
Suppose that rotation takes place in the $(Y^1,Y^2)$ plane with constant
angular velocity $\om$
\begin{equation*}
  Y=(\s\cos\om\tau,~\s\sin\om\tau),\qquad\s\in[-L/2,L/2].
\end{equation*}
Here we changed the interval for the space coordinate to make rotation happen
around the center of mass. Then the following equalities hold
\begin{align*}
  \dot Y=&(-\s\om\sin\om\tau,~\s\om\cos\om\tau)
\\
  Y'=&(\cos\om\tau,~\sin\om\tau),
\\
  \dot Y^2=&\s^2\om^2\sin^2\om\tau+\s^2\om^2\cos^2\om\tau=\s^2\om^2,
\\
  Y^{\prime2}=&\cos^2\om\tau+\sin^2\om\tau=1,
\\
  (\dot Y,Y')=&-\s\om\cos\om\tau\sin\om\tau+\s\om\sin\om\tau\cos\om\tau=0,
\\
  \ddot Y=&(-\s\om^2\cos\om\tau,~-\s\om^2\sin\om\tau),
\\
  Y''=&(0,~0).
\end{align*}
The constraints (\ref{ubbcvl}) are fulfilled. It is easily verified that
equations of motion (\ref{uvvxbg}) are also satisfied. Boundary conditions
(\ref{ubvftr}) define the rotational speed: $L\om=2\sqrt2c$. We see that
the angular velocity of rotations is inversely proportional to the length
of the nonrelativistic string as that for the relativistic string.
\section{Conclusion}
In the paper, we define the nonrelativistic limit for the bosonic
Nambu--Goto string which results into the reparameterization invariant
Lagrangian (\ref{ubcnde}) describing nonrelativistic bosonic string. We present
the canonical formulation of the corresponding model and obtain particular
solutions of the Euler--Lagrange equations: one class of solutions of the
infinite moving and rotating spiral and the other solution in the form of finite
straight rod rotating with constant angular velocity.

The obtained Lagrangian can be considered by itself as the model of the string.
If this model is applied in nonrelativistic physics, e.g.\ in polymer physics,
the light velocity should be replace by the velocity of speed.

\end{document}